\newtheorem{theorem}{Theorem}
\begin{document}
	
	%\title{Unsourced Multiple Access for Gaussian MAC and MIMO: Basic Code Design, Power Division Optimization and False Alarm Error-Free Receiver Via Joint Pattern, Data, and Channel Estimation}
	
	% \title{Unsourced Random Access with Sparse Code Transceiver Design and Novel TIN-SIC Power Division Strategy Under Gaussian MAC}
	\title{Sparse Code Transceiver Design \\for Unsourced Random Access with Analytical Power Division in Gaussian MAC}
	
	\author{Zhentian Zhang, Mohammad Javad Ahmadi, Jian Dang, Kai-Kit Wong, Zaichen Zhang, Christos Masouros
		\vspace{-5mm}
		
		\thanks{ }% <-this % stops a space
		\thanks{Zhentian Zhang, J. Dang, Zaichen Zhang are with the National Mobile Communications Research Laboratory, Frontiers Science Center for Mobile Information Communication and Security, Southeast University, Nanjing, 210096, China. Jian Dang, Zaichen Zhang are also with the Purple Mountain Laboratories, Nanjing 211111, China (e-mail: \{zhangzhentian, dangjian, zczhang\}@seu.edu.cn).}% <-this % stops a space
		\thanks{M. J. Ahmadi is with the Chair of Information Theory and Machine Learning, Technische Universität Dresden, 01062 Dresden, Germany (e-mail: Mohammad\_javad.ahmadi@tu-dresden.de).}
		\thanks{K. K. Wong and  C. Masouros are with the Department of Electronic and Electrical Engineering, University College London, Torrington Place, WC1E 7JE, United Kingdom and K. K. Wong is also with the Yonsei Frontier Lab., Yonsei University, 03722 Korea. (e-mails: \{kai-kit.wong, c.masouros\} @ucl.ac.uk).}
		\thanks{M. J. Ahmadi and Zhentian Zhang contribute equally to this work.}
		\thanks{Corresponding authors: Jian Dang (dangjian@seu.edu.cn).}}
	%
	%% The paper headers
	%\markboth{Journal of \LaTeX\ Class Files,~Vol.~14, No.~8, August~2021}%
	%{Shell \MakeLowercase{\textit{et al.}}: A Sample Article Using IEEEtran.cls for IEEE Journals}
	%
	%\IEEEpubid{0000--0000/00\$00.00~\copyright~2021 IEEE}
	% Remember, if you use this you must call \IEEEpubidadjcol in the second
	% column for its text to clear the IEEEpubid mark.
	
	\maketitle
	\begin{abstract}
		In this work, we discuss the problem of unsourced random access (URA) over a Gaussian multiple access channel (GMAC). To address the challenges posed by emerging massive machine-type connectivity, URA reframes multiple access as a coding-theoretic problem. The sparse code-oriented schemes are highly valued because they are widely used in existing protocols, making their implementation require only minimal changes to current networks. However, drawbacks such as the heavy reliance on extrinsic feedback from powerful channel codes and the lack of transmission robustness pose obstacles to the development of sparse codes. To address these drawbacks, a novel sparse code structure based on a universally applicable power division strategy is proposed. Comprehensive numerical results validate the effectiveness of the proposed scheme. Specifically, by employing the proposed power division method, which is derived analytically and does not require extensive simulations, a performance improvement of approximately 2.8 dB is achieved compared to schemes with identical channel code setups.
	\end{abstract}
	\begin{IEEEkeywords}
		Unsourced random access, sparse code, TIN-SIC power division, Gaussian MAC.
	\end{IEEEkeywords}
	
	\section{Introduction}
	\subsubsection{Background and Related Work}
To address the challenges of emerging massive connectivity and support future massive machine-type communication (mMTC) networks, unsourced random access (URA) redefines the multiple access design as a coding-theoretical problem~\cite{Polyanskiy}. In general, URA aims to develop a code that can accommodate a large number of user equipments (UEs) simultaneously, while operating under finite block lengths and limited channel uses. The potential advantages in energy and spectral efficiency, along with its low complexity and rapid access features, have drawn significant attention from researchers in academia \cite{acdemia and industrial1} and industry \cite{acdemia and industrial2}.
	
	The Gaussian multiple access channel (GMAC) is a key model for investigating practical URA code design. Several multiple access code structures have been proposed to approach the achievable limit of GMAC-URA obtained in \cite{Polyanskiy}. Specifically, the random spreading (RS) code \cite{PD_Mohamod,RS-Polar,RS-LDPC,SKP} leverages spectral spreading gain to reduce multi-user interference and achieve improved estimation precision. Moreover, the coded/coupled compressive sensing (CCS) category \cite{CCS,CCS2,CRC-BMST} performs segmented transmission using a tree-based decoder, resulting in overall complexity that scales linearly with activity. Especially, sparse code-oriented schemes \cite{ODMA_RA,ODMA_NB_LDPC,ODMA_Polar} require minimal modification to existing communication systems, and their scalability and flexibility make them strong candidates for URA code design.
    \subsubsection{Motivations}
    While sparse code structures offer many advantageous features, several issues remain to be addressed. One challenge is that the performance of current implementations heavily relies on the extrinsic feedback from the channel decoder. This leads to significant variation in performance, depending on the channel code used. For example, consider on-off multiple access (ODMA) \cite{ODMA_RA}, a widely used sparse code structure that can be deployed with various channel codes. ODMA with low-density parity-check (LDPC) \cite{ODMA_Polar} demonstrates nearly 3 dB worse capacity performance compared to ODMA with non-binary LDPC \cite{ODMA_NB_LDPC} at 300 active UEs. As the design of powerful channel codes becomes increasingly complex, the viability of URA sparse codes is put into question. Another challenge is the robustness of the current transmission structure, which requires further improvement. For instance, ODMA with Polar codes reaches the theoretical limit with fewer than 100 UEs, but the performance gap widens significantly as the number of UEs increases. To address this, system design must evolve to better accommodate the growing user densities.
	
	\subsubsection{Contributions}
	This paper focuses on the design of sparse codes for GMAC URA, with the development of a power division (PD) strategy aimed at enhancing system performance. Unlike the approach in \cite{PD_Mohamod}, the proposed PD eliminates the need for exhaustive search and provides power allocation ratios through theoretical solutions, formulated based on the scenario of treating interference as noise with successive interference cancellation (TIN-SIC)~\cite{andreev2020polar, Ahmadi2023Unsourced}. Furthermore, the slotted structure limits the number of simultaneously active users, reducing the load and computational complexity of the multi-user detector (MUD) while improving detection reliability.

        \subsubsection{Content Structures}
	In Sec.~\ref{GMAC}, the transceiver design of the proposed sparse code for URA is elaborated. In Sec.\ref{TIN-SIC}, the proposed TIN-SIC PD strategy is explained by detailed mathematics. Subsequently, the numerical results are illustrated in Sec.~\ref{numerical} to validate the superiority of the proposed TIN-SIC PD method and the proposed sparse code structure. Eventually, conclusion is stated in Sec.~\ref{conclusion}.
	\begin{figure}[htp]
		\centering
		\includegraphics[width=3.5in]{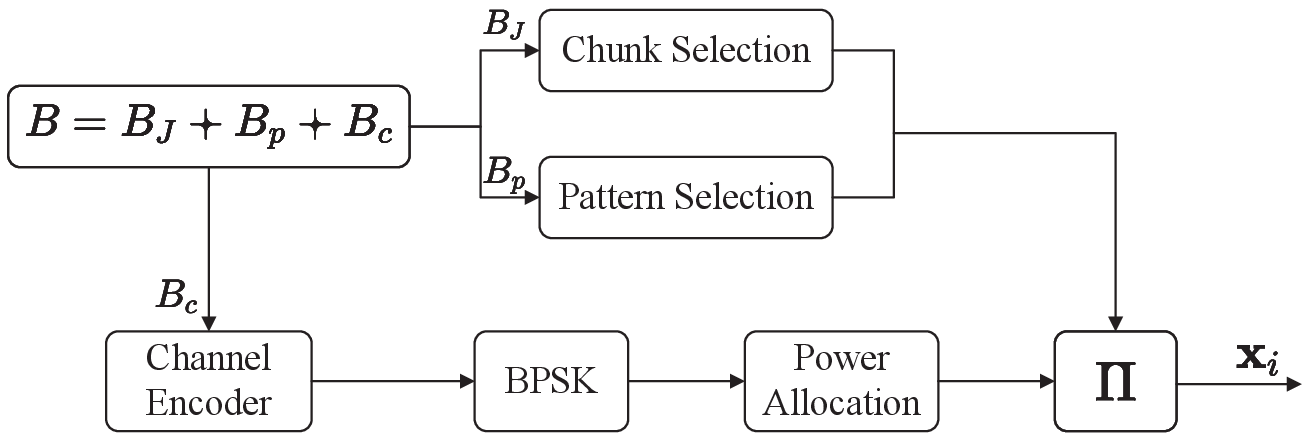}
		\caption{Illustrations of encoding procedures under GMAC.}\label{fig:GMAC_Encoder}
	\end{figure}
	\section{Transceiver Design}\label{GMAC}
In this section, the problem of multiple access within the context of URA is addressed. The details of the proposed basic URA code are elaborated under GMAC, where a receiver with a single antenna and unit channel gain is considered. The receiver decodes the overlapped noisy observations with iterative Gaussian approximation decoder. Meanwhile, power division method is utilized as an auxiliary approach to enhance the decoding performance.
	
	\begin{figure*}[htp]
		\centering
		\includegraphics[width=6in]{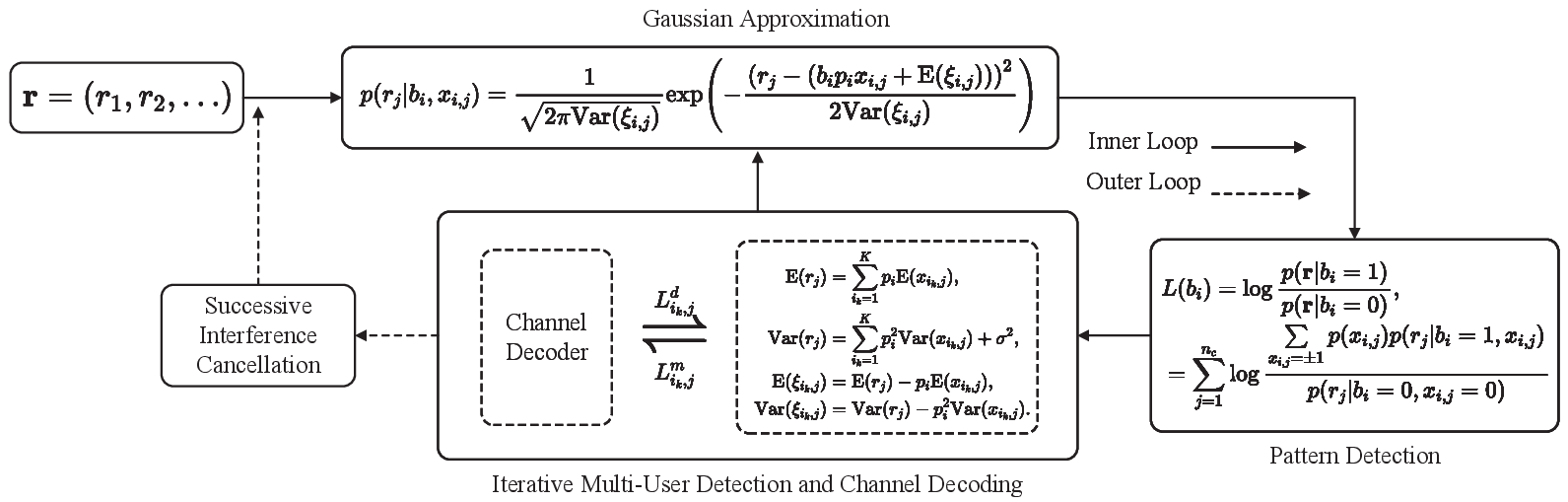}
		\caption{{Illustration of iterative decoding of the proposed GMAC scheme}: The \textit{inner loop} passes statistic informaiton to conduct joint data and pattern detection; The \textit{outer loop} conducts SIC to reduce interference.}\label{fig:GMAC_Decoder}
	\end{figure*}
	\subsubsection{Signal Model and Encoder Design}
There are $K_a$ active UEs sharing finite block length with $n$ channel uses. Each UE intends to transmit $B$ bits of information. The duration of the transmission is uniformly divided into \(J\) chunks, each with \(n_J\) channel uses, such that \(n = n_J J\). The average number of users in a single chunk is denoted as \(\bar{K}_a = \frac{K_a}{J}\). Each UE divides its \( B \) bits of information into three parts with lengths \( B_J \), \( B_p \), and \( B_c \), where \( B = B_p + B_J + B_c \). The first part, consisting of \( B_J \) bits, determines which chunk to transmit the signal in, where \( B_J = \log_2 J \). The second part, of length \( B_p \), is mapped to a column of the pattern codebook \( \mathcal{P} \in \{0,1\}^{n_p\times 2^{B_p}} \) to select a permutation pattern. Each column of this codebook, denoted as \( \mathbf{p}_i \in \{0,1\}^{n_p\times 1} \) for \( i \in \{1,2,\ldots, 2^{B_p}\} \), represents a unique pattern with exactly \( n_c \) nonzero elements, while the remaining \( n_p - n_c \) elements are zeros. The third part, of length \( B_c \), is encoded using a channel code, modulated with binary phase shift keying (BPSK), permuted according to the selected permutation pattern, and zero-padded to construct the transmitted signal \( \mathbf{x}_i \in \{-1,0,1\}^{n_p \times 1} \), which is then transmitted through the selected chunk. The encoding procedures of singe UE is depicted in Fig.~\ref{fig:GMAC_Encoder}. Ignoring asynchronous errors, the noisy received signal within a single chunk can be written as:
	\begin{equation}\label{eq:1}
		\mathbf{r} = \sum_{i=1}^{2^{B_p}}b_ip_i\mathbf{x}_i+\mathbf{n},
	\end{equation}   
 where \(\mathbf{n}\) represents the additive white Gaussian noise (AWGN) with zero mean and variance \(\sigma^2\), and \(p_i\) is the power division ratio, which is a positive constant, with \(p_i = 1\) when no power division is considered. Additionally, \(b_i \in \{0, 1\}\) is the activity indicator; if \(b_i = 1\), the pattern \(\mathbf{p}_i\) is considered active, and the receiver attempts to restore \(\mathbf{x}_i\); if \(b_i = 0\), the signal \(\mathbf{x}_i\) is treated as a zero vector. Note that only \(K_a\) out of the \(2^{B_p}\) indicators, \(b_i\)'s, are non-zero, and these non-zero indicators are detected jointly with the data.
 
The energy-per-bit, \( E_b/N_0 \), where \( N_0 = 2\sigma^2 \), is given by:

\begin{equation}
\frac{E_b}{N_0} = \frac{n_c}{2B\sigma^2}.
\end{equation}
The system's performance is evaluated using the per-user probability of error (PUPE), defined as:

\begin{equation}
P_e = \frac{1}{K_a} \sum_{\mathbf{u} \in \mathcal{A}} \mathbb{P}\left(\mathbf{u} \notin \mathcal{L}(\mathbf{r})\right),
\end{equation}
where \( \mathcal{L}(\mathbf{r}) \) denotes the list of recovered messages with a size of at most \( K_a \), and \( \mathcal{A} \) is the list of active UE's messages.
	\subsubsection{Iterative Decoding}
In this section, the receiver performs iterative decoding to jointly detect the pattern and data. An overview of the iterative decoding process is provided below:
\begin{itemize}
		\item[-]Firstly, via the a priori statistics information, the Gaussian approximation multi-user detector (MUD) computes the soft log-likelihood ratio (LLR) information $L^{m}_{i_k,j}$ of $k$-th non-zero element in the $i$-th pattern at the $j$-th channel use, i.e., $k\in \left\{1,2,\ldots,|\mathcal{D}|\right\},~j\in \left\{1,2,\ldots,n_p\right\}$, where the set $\mathcal{D}$ contains the potential indices of nonzero elements at a given channel use.
		\item[-]Subsequently, the LLRs serve as input for the channel decoder. The channel decoder produces extrinsic LLRs \( L^d_{i_k,j} \), where \( i \in \{1,2,\ldots,n_c\} \), and updates the a priori statistics using a posteriori probability information.
		\item[-]The above two steps iteratively exchange the LLRs \(L^m_{i_k,j} \rightleftharpoons L^d_{i_k,j}\), allowing the system to extract distinct patterns across the \(n_p\) channel uses and ultimately detect both the patterns and the data.
	\end{itemize} 
	
The received signal for the $j$-th channel use can be written as
	\begin{equation}\label{eq:4}
		r_j = \sum_{i=1}^{2^{B_p}}b_ip_ix_{i,j}+n_j
		= b_ip_ix_{i,j}+\underbrace{\sum_{l\neq i}^{2^{B_p}}b_lp_lx_{l,j}+n_j}_{\xi_{i,j}},
	\end{equation}
where $\xi_{i,j}$ is the interference-plus-noise term and $n_j\sim \mathcal{N}(0,\sigma^2)$. Let $\mathrm{E}\left(\xi_{i,j}\right)$ and $\mathrm{Var}\left(\xi_{i,j}\right)$ denote the expected mean and variance. The conditional probability density function (PDF) of \( r_j \), given \( b_i \) and \( x_{i,j} \), can be written as:
	\begin{equation}\label{eq:5}
		\begin{aligned}
			&p\left(r_{j}|b_{i},x_{i,j}\right)\\
			&=\frac{1}{\sqrt{2\pi\mathrm{Var}\left(\xi_{i,j}\right) }}\exp\left(-\frac{\left(r_{j}-\left(b_ip_ix_{i,j}+ \mathrm{E}\left(\xi_{i,j}\right) \right)\right)^{2}}{2 \mathrm{Var}\left(\xi_{i,j}\right) }\right).
		\end{aligned}
	\end{equation}
During each iteration, the patterns deemed as active indicate the amount of symbols to be detected during each channel uses, i.e., only $|\mathcal{D}|$ active $x_{i,j}$ at single channel use. Therefore, we use $i_k \in \{1,2,\ldots,K=|\mathcal{D}|\}$ to index the $k$-th element in the $i$-th pattern and use $x_{i_k,j}$ to denote the $i_k$-th element to be detected at $j$-th channel use. 
	
The decoding begins with the initialization of \( p\left(x_{i_k,j}=\pm 1\right) = \frac{1}{2} \), \( \mathrm{E}\left(\xi_{i,j}\right) = 0 \), and \( \mathrm{Var}\left(\xi_{i,j}\right) = \bar{K}_a + \sigma^2 \), which will be updated in subsequent iterations. The soft probabilistic information of the activity indicator \( L(b_i) \) can be estimated by combining the statistical information of the \( n_c \) non-zero elements across \( n_p \) channel uses:
\begin{equation}\label{eq:6}
	\begin{aligned}
		&L(b_i) = \log \frac{p(\mathbf{r}|b_i=1)}{p(\mathbf{r}|b_i=0)}, \\
		&= \sum_{j=1}^{n_c} \log \frac{\sum\limits_{x_{i_k,j}=\pm 1} p\left(x_{i_k,j}\right) p\left(r_j|b_i=1,x_{i_k,j}\right)}{p\left(r_j|b_i=0,x_{i_k,j}=0\right)},
	\end{aligned}
\end{equation}
where the magnitude \( |L(b_i)| \) indicates the strength of inactivity/activity, and the sign (negativity/positivity) indicates inactivity/activity, respectively. For the $k$-th active element in the $i$-th pattern at $j$-th channel use, $p(x_{i_k,j}=\pm 1)$ can be calculated by LLR information $L^m_{i_k,j}$:
	\begin{equation}\label{eq:7}
		\begin{aligned}
		p(x_{i_k,j}&=-1)=\frac{p(r_j|x_{i_k,j}=-1)}{p(r_j|x_{i_k,j}=-1)+p(r_j|x_{i_k,j}=+1)},
			\\&=\frac{1}{1+\frac{p(r_j|x_{i_k,j}=+1)}{p(r_j|x_{i_k,j}=-1)}}
			=\frac{1}{1+\exp\left(L^m_{i_k,j}\right)}.
		\end{aligned}
	\end{equation}
	\begin{equation}\label{eq:8}
		p(x_{i_k,j}=+1)=1-p(x_{i_k,j}=-1)=\frac{\exp\left(L^m_{i_k,j}\right)}{1+\exp\left(L^m_{i_k,j}\right)},
	\end{equation}
	where $L^m_{i,j}=\log{\frac{p(r_j|x_{i_k,j}=+1)}{p(r_j|x_{i_k,j}=-1)}}$ initializes the demodulation LLRs, and \eqref{eq:7} and \eqref{eq:8} are used to calculate $L(b_i)$ in \eqref{eq:6}. 
	
	With the aforementioned definitions, the iterative data decoding procedures $L^m_{i_k,j}\rightleftharpoons L^d_{i_k,j}$ are explained step by step:
	\begin{itemize}
		\item[]\textbf{Step 1: Update Signal and Interference Statistics}
		\begin{subequations}\label{eq:9}
			\begin{align}
				\mathrm{E}\left(r_j\right)&=\sum_{i_k=1}^{K}p_i\mathrm{E}\left(x_{i_k,j}\right),\label{eq:9.1}\\
				\mathrm{Var}\left(r_j\right)&=\sum_{i_k=1}^{K}p_i^2\mathrm{Var}\left(x_{i_k,j}\right)+\sigma^2,\label{eq:9.2}\\
				\mathrm{E}\left(\xi_{i_k,j}\right)&=\mathrm{E}\left(r_j\right)-p_i\mathrm{E}\left(x_{i_k,j}\right),\label{eq:9.3}\\
				\mathrm{Var}\left(\xi_{i_k,j}\right)&=\mathrm{Var}(r_j)-p_i^2\mathrm{Var}(x_{i_k,j}).\label{eq:9.4}
			\end{align}
		\end{subequations}
\item[] \textbf{Step 2: A Priori LLR Calculation }
		\begin{equation}\label{eq:10}
			L^m_{i_k,j} = 2p_i\cdot \frac{r_j-E\left(\xi_{i_k,j}\right)}{\mathrm{Var}\left(\xi_{i_k,j}\right)}.
		\end{equation}
\item[]\textbf{Step 3: Feed $L^m_{i_k,j}$ into the Decoder}\footnote{Channel codes such as LDPC and RA codes can generate soft LLRs as output. However, channel codes such as Polar codes are not suitable for generating LLRs directly as output.}\textbf{to Update}\footnote{For performance comparison with and without the subtraction of $L_{i_k,j}^m$, please refer to \cite[Fig. 3]{IDMA}. The repeated use of the MUD output as the prior for the next round of demodulation causes performance degradation.} $L^d_{i_k,j}$.

		\begin{equation}\label{eq:11}
			L^d_{i_k,j}=L^d_{i_k,j}-L^m_{i_k,j}.
		\end{equation}
		\item[]\textbf{Step 4: Update $\mathrm{E}\left(x_{i_k,j}\right)$ and $\mathrm{Var}\left(x_{i_k,j}\right)$}
		\begin{subequations}\label{eq:12}
			\begin{align}
				\mathrm{E}\left(x_{i_k,j}\right)&=\tanh\left(L^d_{i_k,j}/2\right),\label{eq:12.1}\\
				\mathrm{Var}\left(x_{i_k,j}\right)&=1-\left(\mathrm{E}\left(x_{i_k,j}\right)\right)^2.\label{eq:12.2}
			\end{align}
		\end{subequations}
	\end{itemize}    
The activity information $L(b_i)$ is updated by inserting the prior probabilities $p(x_{i_k,j}=\pm 1)$ from \eqref{eq:7} and \eqref{eq:8} into \eqref{eq:6}. For decoded messages that pass the channel code parity check with $L(b_i)>0$, the receiver performs SIC by subtracting the corresponding signals of the qualified messages from the received signal. The proposed receiver framework design is shown in Fig.~\ref{fig:GMAC_Decoder} where the inner loop refers to procedures from \eqref{eq:5} to \eqref{eq:12} and the outer loop refers SIC procedure.
	\section{TIN-SIC Power Division}\label{TIN-SIC}
In this section, we describe how to allocate power to different users through the appropriate selection of \( p_i \). The so-called optimal power allocation refers to a scenario where all \( K_a \) users are divided into \( m \) groups, each containing \( K_1, K_2, \ldots, K_m \) users, with corresponding power levels \( P_1 \leq P_2 \leq \ldots \leq P_m \). However, due to the random access nature of the proposed scheme, where no cooperation exists between users and the BS, assigning a specific power level to each active user is not feasible. Instead, we partition the columns of the pattern codebook into \( m \) groups, each associated with a different power ratio. This ensures that each user is randomly assigned a power level based on their selected column from the pattern codebook.

According to \cite{PD_Mohamod}, selecting an equal number of users in each group represents the optimal case, i.e., \( K_1 = \ldots = K_m = \frac{K_a}{m} \). Additionally, the optimized number of groups is determined by solving the following problem:
\begin{equation}\label{eq:15}
		\hat{m}=\min_m\left(1+\frac{K_a}{m}\gamma_0\right)^m,
\end{equation}
Where \( \gamma_0 \) is the signal-to-noise ratio of each user required to achieve the target error in the case of \( \frac{K_a}{m} \) users. In \cite{PD_Mohamod}, \( \gamma_0 \) is obtained through extensive simulations. However, we propose an analytical approach to estimate it by considering the TIN-SIC system.

\begin{theorem}

    \label{theorem1}
Consider a URA system with \( K_0 \) users, each transmitting \( B_0 \) bits of information over \( n_0 \) channel uses. Let \( \gamma \) be the ratio of each user's average per-channel-use power to the noise variance. An approximation of the PUPE for the TIN-SIC scheme is given by:
\begin{align}
    \epsilon\left(\gamma,K_0,n_0,B_0\right)\approx\sum_{k=1}^{K_0}\dfrac{\bar{K}_k}{K_0}\mathbb{P}(\xi_{n,k})\prod_{j=1}^{k-1}(1-\mathbb{P}(\xi_{n,j})),
\end{align}
where 
    $\mathbb{P}(\xi_{n,k})=Q\left({(C-R_c)}/{\sqrt{V_{dis}/n_0}}\right)^{\bar{K}_k}$, $\bar{K}_k=K_0-k+1$, $C=0.5\log_2(1+\alpha_k)$, $V_{dis}=\dfrac{\alpha_k}{2}\dfrac{\alpha_k+2}{(\alpha_k+1)^2}\log_e^2(2)$, and $R_c=B_0/n_0$, and $\alpha_k=\dfrac{\gamma}{(1+(K_0-k)\gamma)}$. 
\end{theorem}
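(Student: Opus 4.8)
The plan is to build the PUPE approximation by combining a finite-blocklength achievability bound for the single-user Gaussian channel with a successive-decoding (SIC) analysis in which every not-yet-decoded user is treated as noise (TIN). First I would set up the decoding order: the receiver peels off users one at a time, and at the $k$-th stage $\bar K_k = K_0 - k + 1$ users remain. Conditioned on the first $k-1$ stages succeeding, each of the remaining users sees its own signal of power $\gamma$ (normalized to the noise) against an interference-plus-noise floor of $1 + (K_0-k)\gamma$, giving the effective per-user SNR $\alpha_k = \gamma / (1 + (K_0-k)\gamma)$. This is the step where I would be careful about what "treating interference as noise" buys: the aggregate of the $K_0-k$ interfering BPSK-like streams plus Gaussian noise is approximated as Gaussian with the stated variance, which is exactly the Gaussian-approximation assumption already used by the MUD in Section~\ref{GMAC}, so it is internally consistent with the paper's model.

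Next I would invoke the normal approximation to the finite-blocklength coding rate for the AWGN channel: for a channel with capacity $C = \tfrac12\log_2(1+\alpha_k)$ and dispersion
\begin{equation}
V_{dis} = \frac{\alpha_k}{2}\,\frac{\alpha_k+2}{(\alpha_k+1)^2}\,\log_e^2(2),
\end{equation}
a code of rate $R_c = B_0/n_0$ and blocklength $n_0$ can be decoded with error probability approximately $Q\!\left((C-R_c)/\sqrt{V_{dis}/n_0}\right)$. I would then argue that a single stage of the peeling decoder succeeds only if \emph{all} $\bar K_k$ remaining users are individually decoded correctly; modeling their error events as independent (again the TIN approximation, since each is decoded against the same Gaussian floor but with independent codewords and noise realizations) yields the stage-failure probability
\begin{equation}
\mathbb{P}(\xi_{n,k}) = Q\!\left(\frac{C-R_c}{\sqrt{V_{dis}/n_0}}\right)^{\bar K_k}.
\end{equation}
Here I should flag a modeling subtlety: strictly, stage $k$ only needs \emph{one} user to decode so SIC can proceed, but the clean product form in the statement corresponds to requiring the whole remaining batch at that stage, so I would adopt the batch-success convention (consistent with how $\bar K_k$ is defined) and note it is the conservative/tractable choice.

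Then I would assemble the end-to-end error by a standard telescoping argument over stages: a tagged user is successfully recovered iff the stage at which it is decoded succeeds and all earlier stages succeeded, so the probability that the process first fails at stage $k$ is $\mathbb{P}(\xi_{n,k})\prod_{j=1}^{k-1}\bigl(1-\mathbb{P}(\xi_{n,j})\bigr)$; weighting by the fraction $\bar K_k / K_0$ of users resolved (or lost) at that stage and summing over $k$ gives
\begin{align}
\epsilon(\gamma,K_0,n_0,B_0) \approx \sum_{k=1}^{K_0} \frac{\bar K_k}{K_0}\,\mathbb{P}(\xi_{n,k})\prod_{j=1}^{k-1}\bigl(1-\mathbb{P}(\xi_{n,j})\bigr),
\end{align}
which is the claim. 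The main obstacle is not any of the algebra but justifying the two Gaussian/independence approximations rigorously — the interference is a finite sum of discrete-modulated signals, not Gaussian, and the per-user error events at a given stage are only conditionally independent — so I would present the result as an approximation (as the theorem statement already does), appealing to the central-limit behavior of the interference for moderate-to-large $K_0-k$ and to the weak correlation of independently coded streams, exactly in the spirit of the density-evolution/EXIT-style heuristics that underlie the iterative decoder of Section~\ref{GMAC}.
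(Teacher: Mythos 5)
Your overall skeleton follows the paper's proof in Appendix A quite closely: treat the residual interference at stage $k$ as noise to obtain the effective SNR $\alpha_k=\gamma/(1+(K_0-k)\gamma)$, invoke the finite-blocklength normal approximation for the single-user error probability $q=Q\!\left((C-R_c)/\sqrt{V_{dis}/n_0}\right)$, assume independence across users and across iterations, decompose the PUPE over the first stage at which the peeling process halts, and weight the stopping event at stage $k$ by $\bar K_k/K_0$. The final assembled formula is the theorem's formula.

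However, there is a genuine flaw in how you justify the central quantity $\mathbb{P}(\xi_{n,k})=q^{\bar K_k}$. You argue that a stage ``succeeds only if all $\bar K_k$ remaining users are individually decoded correctly'' and call $q^{\bar K_k}$ the stage-failure probability under a ``batch-success convention.'' That is inconsistent with the formula you write: under your reading the stage-failure probability would be $1-(1-q)^{\bar K_k}$, not $q^{\bar K_k}$. The exponentiated form $q^{\bar K_k}$ is the probability that \emph{none} of the $\bar K_k$ remaining users is decoded, which is exactly the event $\xi_{n,k}$ in the paper's proof: TIN-SIC continues as long as at least one user is decoded per iteration (that user is then cancelled), and the algorithm stops precisely when nobody decodes. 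This reading is also what makes the rest of your own assembly coherent — it is why stopping at stage $k$ costs $\bar K_k/K_0$ in PUPE (all remaining users are lost when no one can be peeled), and why $\mathbb{P}(\xi_{n,k})\prod_{j=1}^{k-1}\bigl(1-\mathbb{P}(\xi_{n,j})\bigr)$ is the probability of first stoppage at stage $k$. Your side remark that the batch requirement is the ``conservative/tractable choice'' is likewise backwards: demanding the whole batch decode would increase the stage-failure probability, whereas $q^{\bar K_k}$ corresponds to the at-least-one-decodes model actually used in the theorem. Once the event $\xi_{n,k}$ is identified correctly, your derivation coincides with the paper's; the Gaussian-interference and independence caveats you raise are reasonable and are implicitly accepted as approximations in the paper as well.
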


\begin{proof}
    See Appendix \ref{appendixA}.
\end{proof}
Using Theorem~\ref{theorem1}, we can find $\gamma_0$ by solving the following problem 
\begin{align}\label{eq:13b}
    \gamma_0=\min_{\gamma}\left|\epsilon\left(\gamma,K_a/m,n,B\right)-\epsilon_{t}\right|,
\end{align}
Where \( \epsilon_t \) is the target PUPE. By substituting \( \gamma_0 \) into \eqref{eq:15}, the optimal number of power groups is obtained. According to \cite{PD_Mohamod}, the transmit power of the \( j \)-th power group is selected as \( P_j = P_1 \frac{\sigma_j^2}{\sigma_1^2} \), where \( \sigma_j^2 = \sigma^2 + \sum_{i=1}^{j-1} P_i \frac{K_a}{m} \). Thus, to control the system's performance, it is sufficient to control \( P_1 \), as the transmit powers of all other groups depend on \( P_1 \). Finally, after selecting \( P_j \)'s, we scale them to satisfy the unit power constraint as in \eqref{eq:1}, i.e., \( \frac{p_1^2 + p_2^2 + \ldots + p_m^2}{m} = 1 \), which gives the power ratio for the \( j \)-th group as:
	\begin{equation}\label{eq:18}
		p_j=\sqrt{\frac{mP_j}{P_1+P_2+\ldots+P_m}}.
	\end{equation}
	
	\section{Numerical Results}\label{numerical}
In this section, the proposed method is compared with state-of-the-art schemes that serve as benchmarks. As is common, all schemes share the same spectral efficiency of \( \frac{B}{n} = \frac{100}{30,000} \approx 0.00333 \), and the target PUPE is 0.05, which is a typical setting for GMAC URA. The benchmarks include various categories of basic GMAC URA codes from recent years: The achievability bound \cite{Polyanskiy}; CCS \cite{CCS} with different inner/outer codes such as cyclic redundancy check-aided block markov superposition transmission (CRC-BMST) \cite{CRC-BMST}, dynamic CCS \cite{dynamic_CCS}; RS with different channel codes and MUD methods such as RS with Polar code \cite{RS-Polar}, LDPC code \cite{RS-LDPC} and sparse Kronecker product (SKP) \cite{SKP} with tail-biting convolutional code; Another kind is sparse code category including ODMA with RA code \cite{ODMA_RA}, None-Binary-LDPC code \cite{ODMA_NB_LDPC}. In terms of this work, new-radio LDPC (NR-LDPC) with $1/3$ code rate is adopted as channel code. The transmission is divided into $J=16$ chunks and the codebook is fixed to $2^{13}$ and $B_c=44$. The collision error is controlled trivial compared with target PUPE and will be automatically accounted into PUPE if false restoration takes place due to collision. 
%	\begin{figure}[htp]
%		\centering
%		\includegraphics[width=3.5in]{BER_opt.eps}
%		\caption{Performance of bit error rate (BER) versus $E_b/N_0$ (dB) under $B_c\in \{44,66,88\}$ and different number of active UEs $K_a\in \{50,100,150,200,250,300\}$.}\label{fig:GMAC_BER}
%	\end{figure}
	\begin{figure}[htp]
		\centering
		\includegraphics[width=3.3in]{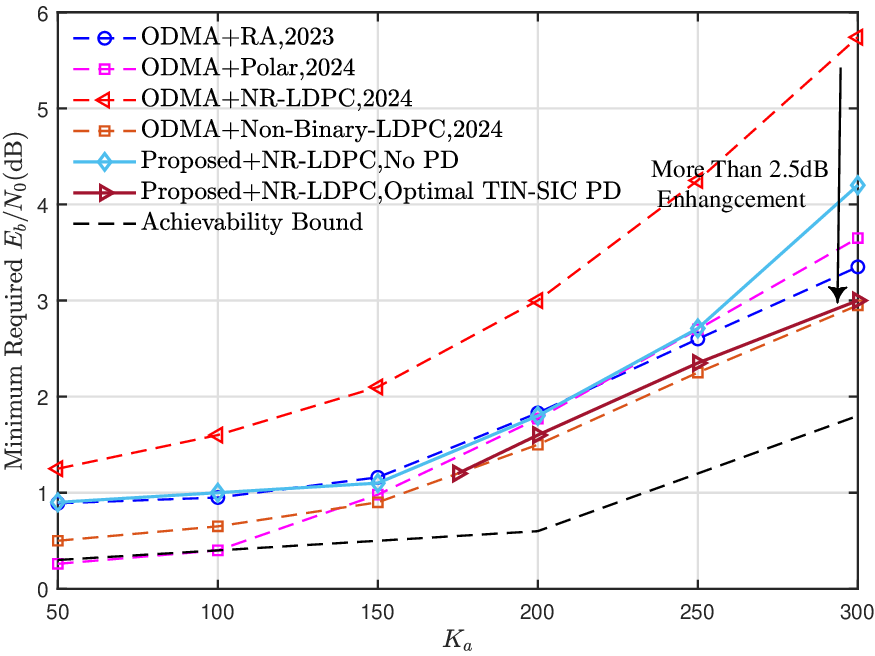}
\caption{Comparison of the minimum-required \( E_b/N_0 \) (dB) between the proposed scheme with/without TIN-SIC PD power division (PD), and ODMA-based schemes using NR-LDPC and polar code \cite{ODMA_Polar}, RA code \cite{ODMA_RA}, and non-binary LDPC \cite{ODMA_NB_LDPC}.}\label{fig:GMAC_ODMAs}
	\end{figure}
	
	\begin{figure}[htp]
		\centering
		\includegraphics[width=3.3in]{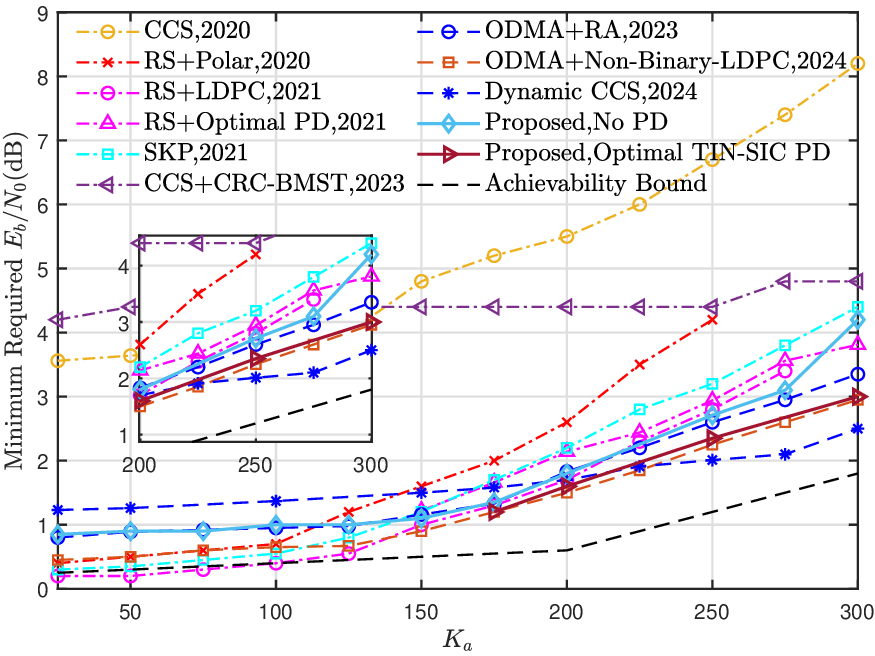}
		\caption{Performance of the minimum required energy-per-bit for a 0.05 PUPE. Chronologically, the benchmarks include achievability bound\cite{Polyanskiy}, CCS\cite{CCS}, RS-Polar\cite{RS-Polar}, RS-LDPC\cite{RS-LDPC}, RS+optimal power division\cite{PD_Mohamod}, SKP\cite{SKP}, CRC-BMST\cite{CRC-BMST}, ODMA+RA\cite{ODMA_RA}, ODMA+non-binary LDPC\cite{ODMA_NB_LDPC}, dynamic CCS\cite{dynamic_CCS}.}\label{fig:GMAC_Capacity}
	\end{figure}
	
%	\subsubsection{Selection on $B_c$} For sparse code kind and the work in this paper, it should be noted that both the sparsity of frame and the choice of channel code, i.e., $n_c/n_p$, influences the system performance drastically. With fixed channel code rate, multiple setups of $B_c$ are selected to find a good balance between the frame sparsity and the performance outcomes and the total channel uses $L$ will be adjusted by fixed the spectral efficiency constraint. 
%	
%	In Fig.~\ref{fig:GMAC_BER}, lines of bit error rate (BER) versus $E_b/N_0$ (dB) under $B_c\in \{44,66,88\}$ and different number of active UEs $K_a\in \{50,100,150,200,250,300\}$ are curved to find a relatively good $B_c$ setup. It can be observed that the red solid lines with $B_c=44$ for all $K_a$ have lower BER compared with other setups under low $E_b/N_0$ region. Thus, $B_c=44$ is adopted for all simulations under GMAC in the sequel.
		
%		\subsubsection{Capacity Performance} 
		In Fig.~\ref{fig:GMAC_ODMAs} and Fig.~\ref{fig:GMAC_Capacity}, the capacity performance in terms of the minimum-required energy-per-bit $E_b/N_0$(dB) among various GMAC URA schemes is illustrated and compared. In Fig.~\ref{fig:GMAC_ODMAs}, the capacity performance of the proposed scheme with/without TIN-SIC PD is compared with ODMA structure with different channel codes. Particularly, the original ODMA with NR-LDPC has a capacity performance 2.75 dB worse than ODMA with non-binary LDPC at $K_a=300$. While the proposed scheme adopts the same channel code (NR-LDPC), it shows very similar performance to ODMA with Non-Binary LDPC with optimal TIN-SIC PD. This outcome demonstrates the superiority of the proposed scheme especially in terms of the effective TIN-SIC power division method. Moreover, the capacity performance with other types of basic codes for GMAC URA is illustrated in Fig.~\ref{fig:GMAC_Capacity}. Overall, the proposed scheme has better capacity performance than many existing code designs. If a more powerful channel code is equipped, the performance can be further improved.
		\section{Conclusion}\label{conclusion}
In this work, an analytical power division method has been developed and integrated into the sparse code framework within the URA GMAC. Compared to a similar scheme with an identical channel code, a 2.75 dB capacity gain can be observed with the proposed transmission structure along with power division. While the proposed scheme significantly improves the overall performance of sparse code-based URA schemes, achieving the theoretical limits still requires breakthroughs in channel code design with advanced decoding techniques.
\begin{appendices}
\section{Proof of Theorem \ref{theorem1}}
    \label{appendixA}
Here, we calculate the PUPE for the TIN-SIC scheme. In TIN-SIC, we assume that at each iteration, at least one user is decoded, and the contribution of that user is removed from the received signal using SIC. Therefore, in the \( k \)-th iteration, there remain \( K_0-k+1 \) users that need to be decoded. We also assume that if no user is decoded in an iteration, the algorithm is stopped. Let \( \xi_{s,k} \) be the event that the iterative algorithm is stopped in the \( k \)-th iteration, and \( \xi_{n,k} \) denote the event that none of the users are decoded in the \( k \)-th iteration. \( \xi_{s,k} \) occurs if the algorithm is not stopped in the previous \( k-1 \) iterations, and it is stopped in the current iteration. Assuming the decoding process in different iterations to be independent of each other, the probability that \( \xi_{s,k} \) occurs is given by $   \mathbb{P}(\xi_{s,k})\approx\mathbb{P}(\xi_{n,k})\prod_{j=1}^{k-1}(1-\mathbb{P}(\xi_{n,j}))$.
Besides, $\mathbb{P}(\xi_{n,k})$ can be approximated as \cite{Block_Error}
\begin{align}
    \mathbb{P}(\xi_{n,k})\approx Q\left(\dfrac{C-R_c}{\sqrt{V_{dis}/n_0}}\right)^{(K_0-k+1)},
\end{align}
where $C$, $V_{dis}$, and $R_c$ are as defined in the statement of the Theorem \ref{theorem1}. Moreover, if $\xi_{s,k}$ happens, the PUPE is calculated as ${(K_0-k+1)}/{K_0}$. Therefore, the total PUPE of the system is calculated as
\begin{align}
    \epsilon=\sum_{k=1}^{K_0}\mathbb{P}(\xi_{s,k})\dfrac{K_0-k+1}{K_0}.
\end{align}
\end{appendices}

		%\newpage

		%\bf{If you include a photo:}\vspace{-33pt}
		%\begin{IEEEbiography}[{\includegraphics[width=1in,height=1.25in,clip,keepaspectratio]{fig1}}]{Michael Shell}
		%Use $\backslash${\tt{begin\{IEEEbiography\}}} and then for the 1st argument use $\backslash${\tt{includegraphics}} to declare and link the author photo.
		%Use the author name as the 3rd argument followed by the biography text.
		%\end{IEEEbiography}
		
		%\vspace{11pt}
		%
		%\bf{If you will not include a photo:}\vspace{-33pt}
		%\begin{IEEEbiographynophoto}{John Doe}
		%Use $\backslash${\tt{begin\{IEEEbiographynophoto\}}} and the author name as the argument followed by the biography text.
		%\end{IEEEbiographynophoto}

		\vfill
	\end{document}